\documentclass[runningheads,envcountsame]{llncs}
\pagestyle{headings}

\usepackage{tikz}
\usetikzlibrary{arrows,shapes,trees}
\usepackage{latexsym}
\usepackage{amssymb}  
\usepackage{amsmath}
\usepackage{stmaryrd}

\begin{document}
 \title{
 Model Checking Markov Chains Against Unambiguous B\"{u}chi
   Automata
 } 
 \author{Michael Benedikt, Rastislav Lenhardt, and James Worrell}
 \institute{Department of Computer Science, University of Oxford, UK }
 \maketitle 
\section{Erratum}
The authors would like to withdraw the claimed proof of
Theorem~\ref{thm:main} in the note below.  The problems with the proof
(and the overall approach, based on the notion of recurrent states)
are detailed in a counterexample that can be found at
\texttt{https://s3-eu-west-1.amazonaws.com/cav16-uba/counterexample.pdf}.
In particular, the above document gives an example of a Markov chain
$M$ and automaton $A$ such that $P_M(L(A))$ is strictly positive but
for which the product $M\otimes A$ has no recurrent state.  This shows
that Equation (\ref{eq:lang2}) does not hold.  The problem here stems
from Equation~(\ref{eq:mistake}), which is invalid.

\section{Introduction}
An automaton is \emph{unambiguous} if each word has at most one
accepting run and \emph{separated} if no word is accepted from two
distinct states.  The classical translation of LTL formulas to
B\"{u}chi automata~\cite{VardiW86} produces unambiguous separated
automata since the states of such automata correspond to complete
subformula types.  Motivated by this observation,
Couvreur \emph{et al.}~\cite{CouvreurSS03} present a
polynomial-time algorithm to model check Markov chains against
separated unambiguous B\"{u}chi automata

In this note we give a polynomial-time algorithm for model checking
Markov chains against B\"{u}chi automata that are unambiguous but not
necessarily separated.  Apart from the extra generality of this
procedure, our main motivation is the fact that the
\emph{build-by-need} translation from LTL to B\"{u}chi automata
described in~\cite{BenediktLW13b}---adapting the construction
of~\cite{GerthPVW95}---produces automata that are unambiguous but
which may not be separated.

\section{Definitions}
We briefly recall the main definitions.  See~\cite{CY95,CouvreurSS03}
for more details.

A \emph{Markov chain} $M=(S,P,\pi)$ consists of a set $S$ of
\emph{states}, a \emph{transition probability function} $P:S\times S
\rightarrow [0,1]$ such that $\sum_{t\in S} P(s,t)=1$ for each state
$s\in S$, and an \emph{initial probability distribution} $\pi$ on $S$.
We assume that all numerical data are rational.  

We denote by $\Pr_M(L)$ the probability that $M$ performs a trajectory
in a given measurable set $L\subseteq S^\omega$.  We extend this
notation to sets of finite words $L\subseteq S^*$, writing $P_M(L)$ as
shorthand for $P_M(LS^\omega)$.

A \emph{non-deterministic automaton} $A=(\Sigma,Q,Q_0,\delta,F)$
comprises a finite \emph{alphabet} $\Sigma$, a finite set of
\emph{states} $Q$, set of \emph{initial states} $Q_0 \subseteq Q$,
\emph{transition function} $\delta:Q\times \Sigma \rightarrow 2^Q$,
and set of \emph{accepting states} $F$.  We extend $\delta$ to a
function $\delta:Q\times\Sigma^+ \rightarrow 2^Q$ by inductively
defining $\delta(q,w\sigma) = \bigcup \{\delta(q',\sigma) : q'\in
\delta(q,w)\}$, where $w\in \Sigma^+$.  We consider automata
alternatively as acceptors of finite words and acceptors of infinite
words (via the B\"{u}chi acceptance condition).  In the former case we
speak of \emph{non-deterministic finite automata (NFA)} and in the
latter case of \emph{non-deterministic B\"{u}chi automata (NBA)}.  In
either case we write $L(A)$ for the language accepted by $A$.

\section{Main Result}
Let $M=(S,P,\pi)$ be a Markov chain, $A$ an unambiguous NBW with
alphabet $\Sigma$, and $\lambda : S \rightarrow \Sigma$ a function
labelling the states of $M$ with letters from the alphabet of $A$.
Write $||M||$ and $||A||$ for the respective lengths of the
representations of $M$ and $A$, assuming that integers are encoded in
binary.  We show how to compute $\Pr_M\{ s_1s_2\ldots \in S^\omega :
\lambda(s_1)\lambda(s_2)\ldots \in L(A)\}$---the probability that a
trajectory of $M$ is accepted by $A$---in time polynomial in $||M||$
and $||A||$.  

Without loss of generality, by first applying an existential renaming
to $A$ along $\lambda$, we assume that the alphabet of $A$ is the set
of states of $M$, i.e., $\Sigma=S$, and the state-labelling map
$\lambda$ is the identity.  Note that unambiguous automata are
preserved under existential renaming.  Our task is now to compute
$\Pr_M(L(A))$.  We first consider the task of determining whether
$\Pr_M(L(A))>0$.

\begin{lemma}
Let $M=(S,P,\pi)$ be a Markov chain and
$A=(S,Q,Q_0,\delta,F)$ an unambiguous NFA.  Then $\Pr_M(L(A))$ is
computable in time polynomial in $||A||$ and $||M||$.
\label{lem:calculate}
\end{lemma}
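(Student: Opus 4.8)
The plan is to reduce the computation to a reachability problem on a weighted graph of size polynomial in $\|M\|$ and $\|A\|$, with unambiguity of $A$ being exactly the ingredient that keeps this graph polynomial rather than exponential (as a naive subset construction would be). First I would normalise: if $\epsilon\in L(A)$ the answer is $1$, so assume $\epsilon\notin L(A)$ and $Q_0\cap F=\emptyset$. Since $P_M(L(A))=P_M(L(A)S^\omega)$ and $L(A)S^\omega=L^{\mathrm{first}}S^\omega$, where $L^{\mathrm{first}}$ denotes the prefix-minimal words of $L(A)$, and since $L^{\mathrm{first}}$ is prefix-free, we get
\[
P_M(L(A))=\sum_{w\in L^{\mathrm{first}}}P_M(w),
\]
the measure of a disjoint union of cylinders; equivalently $P_M(L(A))=1-P_M(\mathrm{Good})$, where $\mathrm{Good}=\{\alpha\in S^\omega:\delta(Q_0,\alpha_1\cdots\alpha_i)\cap F=\emptyset\text{ for all }i\}$ is a safety set. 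A deterministic safety automaton for $\mathrm{Good}$ is the subset construction of $A$ with $F$-meeting subsets made fatal, so $P_M(\mathrm{Good})$ is a reachability-avoidance probability in the product of $M$ with that automaton; but the subset construction is exponential, so this observation alone does not give a polynomial-time algorithm.

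The heart of the argument is to collapse this exponential object using unambiguity. Call a state $q$ of $A$ \emph{useful} if $\delta(q,u)\cap F\neq\emptyset$ for some word $u$, and let $Q_u$ be the set of useful states. The key point is that, because $A$ is unambiguous, for every word $v$ each useful state in $\delta(Q_0,v)$ is reached by \emph{exactly one} run of $A$ on $v$: two runs to a useful state $q$ could each be extended, via a word witnessing usefulness of $q$, to two distinct accepting runs of one word, contradicting unambiguity. Consequently the useful part $W_i:=\delta(Q_0,\alpha_1\cdots\alpha_i)\cap Q_u$ of the subset construction evolves by a \emph{disjoint} union, $W_{i+1}=\bigsqcup_{q\in W_i}\bigl(\delta(q,\alpha_{i+1})\cap Q_u\bigr)$ --- useful tokens may split and die, but never merge --- and membership of $\alpha$ in $L(A)S^\omega$ depends only on the sequence $W_0,W_1,\dots$, since states outside $Q_u$ can never lie on a run reaching $F$. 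I would then work with the product $M\otimes A$ restricted to $S\times Q_u$, which has size $O(\|M\|\cdot\|A\|)$, and set up a rational linear system over its vertices, exploiting the non-merging property so that values attached to individual product-vertices combine correctly along the evolution of $W_i$, with absorbing behaviour imposed at $S\times F$.

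I expect the main obstacle to be precisely the design and correctness of this linear system. The non-merging property by itself does \emph{not} make per-vertex values simply additive: a single trajectory may drive two distinct useful tokens into $F$ at two different times, and a naive sum over starting states would then count that trajectory twice. The delicate task is therefore to arrange the system so that each accepted trajectory is credited exactly once --- morally, only to the token reaching $F$ first --- and to prove that the resulting system is consistent and has a unique solution; this is where the combinatorics forced by unambiguity must be used in full, and where miscounting in an $\omega$-word variant is a known pitfall (cf.\ the erratum above). Once the system is shown to compute $\sum_{w\in L^{\mathrm{first}}}P_M(w)$, it is a rational linear system of polynomial size and can be solved by Gaussian elimination in time polynomial in $\|M\|$ and $\|A\|$, which gives $P_M(L(A))$.
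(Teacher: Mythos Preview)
Your proposal is not yet a proof but a plan with an explicitly open step: you recognise that a naive per-vertex linear system on $S\times Q_u$ may double-count trajectories that drive two distinct tokens into $F$ at different times, and you leave the design of a correct system as ``the delicate task'' without carrying it out.  That is the gap.  Everything before that point (prefix-minimal words, the safety reformulation, the non-merging of useful tokens) is sound but does not by itself yield an algorithm.

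The paper's proof shows that the detour through subsets and first-hitting tokens is unnecessary.  It works directly with one variable $\xi_{s,q}$ per pair $(s,q)\in S\times Q$, intended to equal $\Pr_{M,s}(L(A,q))$; sets $\xi_{s,q}=1$ when $q\in F$, $\xi_{s,q}=0$ when $(s,q)$ cannot reach an accepting vertex in the product graph, and for the remaining vertices writes the one-step recursion
\[
\xi_{s,q}\;=\;\sum_{s'\in S}\ \sum_{q'\in\delta(q,s')} P(s,s')\,\xi_{s',q'}\,.
\]
Unambiguity is invoked at exactly one point, to pass from $\Pr_{M,s'}\bigl[\bigcup_{q'\in\delta(q,s')}L(A,q')\bigr]$ to the sum $\sum_{q'}\Pr_{M,s'}(L(A,q'))$: two successors $q'_1\neq q'_2$ of a reachable $q$ whose accepted languages overlapped would yield two accepting runs of $A$ on one word.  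The answer is then $\sum_{q\in Q_0}\xi_{s_0,q}$, and uniqueness of the linear system is proved by showing the coefficient matrix satisfies $C^n\to 0$.  No ``which token reaches $F$ first'' bookkeeping appears; the very double-counting you worry about is precisely what the single appeal to unambiguity is meant to rule out.  Your instinct that this additivity step---going from disjoint finite-word languages to additive cylinder measures---is the crux is well placed, and it is indeed the only delicate line in the paper's argument; but the paper's framework reduces the whole lemma to that one claim and an elementary uniqueness argument, whereas your plan has not yet produced a candidate system to analyse.
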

\begin{proof}
Let $L(A,q) \subseteq S^*$ denote the set of words accepted by $A$
with $q\in Q$ as initial state.  Similarly let $\Pr_{M,s}$ denote the
probability distribution on $S^\omega$ induced by $M$ with initial
state distribution $P(s,-)$.  Without loss of generality, assume that
$S$ contains a state $s_0$ with $P(s_0,s) = \pi(s)$ for each $s \in
S$.  Let us also assume that every state in $A$ is reachable from
$Q_0$ and can reach $F$.

Define a directed graph $G_{M \otimes A}=(V,E)$, with set of vertices
$V=S\times Q$ and $(s,q)\mathrel{E}(s',q')$ if and only if $P(s,s')>0$
and $q'\in \delta(q,s')$.  Say that a vertex $(s,q) \in V$ is
\emph{accepting} if $q\in F$ and \emph{dead} if it cannot reach an
accepting vertex.  Write $V^{\mathit{acc}}$ and $V^{\mathit{dead}}$
for the respective sets of accepting and dead vertices, and write $V^?
= V \setminus (V^{\mathit{acc}} \cup V^{\mathit{dead}})$.

Introduce a real-valued variable $\xi_{s,q}$ to represent
$\Pr_{M,s}(L(A,q))$, so that $\sum_{q \in Q_0}\xi_{s_0,q}$ represents
$\Pr_M(L(A))$.  We claim that the following system of equations
uniquely defines $\xi_{s,q}$:
\begin{align}
\xi_{s,q} &=  0  & (s,q) \in V^{\mathit{dead}} \label{eq:first}\\
\xi_{s,q} &=  1 &  (s,q) \in V^{\mathit{acc}} \label{eq:second} \\
\xi_{s,q} &=  \sum_{s'\in S} \, \sum_{q' \in \delta(q,s')}
                P(s,s') \cdot \xi_{s',q'} & (s,q)\in V^?
\label{eq:third}
\end{align}

The correctness of (\ref{eq:first}) and (\ref{eq:second}) is
self-evident.  Correctness of (\ref{eq:third}) follows from the
following calculation:
\begin{eqnarray*}
\xi_{s,q}&=& \mathrm{Pr}_{M,s}(L(A,q))\\
        &=& \sum_{s'\in S} P(s,s') \cdot \mathrm{Pr}_{M,s'} \Big[
\bigcup_{q'\in \delta(q,s')} L(A,q')\Big]\\
        &=& \sum_{s'\in S} \sum_{q'\in \delta(q,s')} 
            P(s,s') \cdot  \mathrm{Pr}_{M,s'}(L(A,q'))
           \qquad\mbox{$A$ is unambiguous}\\
       & = & \sum_{s' \in S} \sum_{q' \in \delta(q,s')} 
                P(s,s') \cdot \xi_{s',q'} \, .
\end{eqnarray*}

To see that the solution of (\ref{eq:third}) is unique, write the
equation system in matrix form as $\boldsymbol{\xi} =
C\vec{\xi}+\vec{d}$, where $\boldsymbol{\xi} = \{ \xi_{(s,q)} : (s,q)
\in V^? \}$, 
\[ C_{(s,q),(s',q')} = \left\{ \begin{array}{ll} 
  P(s,s')\; & q' \in \delta(q,s')\\
    0 &  \mbox{otherwise}
\end{array}\right . \;\;\mbox{ and }\;\;
  d_{(s,q)} = \sum_{s': \delta(q,s')\cap F\neq\emptyset} P(s,s') \, .\]

  Given two solutions $\vec{\xi}$ and $\vec{\xi'}$,
we have $\vec{\xi}-\vec{\xi'} =
C^n(\vec{\xi}-\vec{\xi'})$ for all $n$.  We will show that $\lim_n
C^n=0$, which proves uniqueness.

The entry of index $(s,q)$ in $(I+C+\cdots + C^n) \vec{d}$
is $\Pr_{M,s}(L(A,q) \cap S^{\leq n})$, which converges
to  $\Pr_{M,s}(L(A,q))$ as $n$ tends to infinity.
It follows that $\lim_n C^n(I+C+\cdots + C^m)\vec{d}=0$ for any
fixed $m \in \mathbb{N}$.  But, since all vertices in $V^?$ can reach
$V^{\mathit{acc}}$, there exists some $m$ such that 
$(I+C+\cdots + C^m)\vec{d}$ is strictly positive in every entry.  We
conclude that $\lim_n C^n=0$.

Since systems of linear equations can be solved in polynomial time, the result
follows.\qed
\end{proof}

We now use Lemma~\ref{lem:calculate} to handle the case of automata
over infinite words.  In particular we use the lemma to classify
states of the product $M\otimes A$ as \emph{recurrent} or not.
\begin{theorem}
Let $M=(S,P,\pi)$ be a Markov chain and $A=(S,Q,Q_0,\delta,F)$ an
unambiguous NBA.  Then $P_M(L(A))$ is computable in time polynomial in
$||M||$ and $||A||$.
\label{thm:main}
\end{theorem}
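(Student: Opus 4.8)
The plan is to lift Lemma~\ref{lem:calculate} from finite words to the B\"uchi acceptance condition, by combining the decomposition of $M$ into bottom strongly connected components (SCCs) with the SCC decomposition of the product graph $G_{M\otimes A}$. As in the proof of the lemma, first replace $\pi$ by a fresh state $s_0$ with $P(s_0,\cdot)=\pi$. Since $A$ is unambiguous, no word has accepting runs from two distinct states of $Q_0$, so the $\omega$-languages $L(A,q_0)$ for $q_0\in Q_0$ are pairwise disjoint and $\Pr_M(L(A))=\sum_{q_0\in Q_0}\Pr_{M,s_0}(L(A,q_0))$, where now $L(A,q)\subseteq S^\omega$ denotes the B\"uchi language accepted from $q$. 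It thus suffices to compute $\xi_{s,q}:=\Pr_{M,s}(L(A,q))$ for every product vertex $(s,q)\in S\times Q$.

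Recall that almost every trajectory of $M$ is eventually absorbed in some bottom SCC $B$, inside which it visits every state infinitely often. Fix such a $B$ and let $G_B$ be the restriction of $G_{M\otimes A}$ to the vertices $B\times Q$. An absorbed trajectory is accepted from $q$ exactly when it has an accepting run; when it does, this (unique) run enters $B$ at some vertex, stays in $G_B$ thereafter, and visits an accepting vertex infinitely often, so it eventually settles inside a single SCC $\mathcal C$ of $G_B$, which must then contain an accepting vertex.

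The core of the argument is to determine which SCCs of $G_B$ can trap an accepting run on a set of trajectories of positive measure --- call these the \emph{recurrent} SCCs --- and then to read off $\xi_{s,q}$ from them. The natural candidate is: $\mathcal C$ is recurrent iff it contains an accepting vertex and is \emph{$M$-closed}, meaning that for every $(s,q)\in\mathcal C$ and every $s'$ with $P(s,s')>0$ there is $q'\in\delta(q,s')$ with $(s',q')\in\mathcal C$, so that a run confined to $\mathcal C$ can always mimic the next move of $M$. One would then show $\xi_{s,q}=1$ whenever $(s,q)$ lies in a recurrent SCC (using $M$-closedness and strong connectivity to construct, with probability one, a run that stays in $\mathcal C$ and visits $F$ infinitely often), $\xi_{s,q}=0$ when no recurrent vertex is reachable from $(s,q)$ by a partial run, and on the remaining vertices use the recurrence $\xi_{s,q}=\sum_{s'}\sum_{q'\in\delta(q,s')}P(s,s')\,\xi_{s',q'}$ plus a reachability argument as in Lemma~\ref{lem:calculate} to pin down a unique solution. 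Lemma~\ref{lem:calculate} is the computational engine for the finite-word probabilities that occur along the way --- propagating mass through the transient part of $M$ into the recurrent region, and, depending on how recurrence is decided, evaluating the relevant return probabilities --- each such event being encoded as an NFA over $S$ built from the product. The graph analyses and the linear algebra are all polynomial, and Lemma~\ref{lem:calculate} is invoked only polynomially often, so the running time is polynomial in $||M||$ and $||A||$.

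The step I expect to be the genuine obstacle is proving that this notion of recurrence is \emph{correct}: that for almost every absorbed trajectory, the trajectory is accepted from $q$ if and only if its unique accepting run is eventually trapped in a recurrent SCC --- equivalently, an identity of the form $\Pr_M(L(A))=\Pr_M(\{\,\text{some partial run reaches a recurrent vertex}\,\})$. The delicate direction is that acceptance forces reaching a recurrent vertex. Because $A$ need not be separated, a single state of $M$ may carry several automaton states at once along consistent partial runs, and one must rule out the scenario in which the accepting run survives forever inside an SCC that is \emph{not} $M$-closed, by visiting that SCC's dangerous vertices only at the random moments when $M$ happens not to take a transition that would kill the run. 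Excluding this seems to require using unambiguity of $A$ far more carefully than the bare recurrence $\xi_{s,q}=\sum P(s,s')\,\xi_{s',q'}$ does; it is here, rather than in the polynomial-time bookkeeping, that the approach must be justified. A further point requiring care is that the auxiliary NFAs handed to Lemma~\ref{lem:calculate} really are unambiguous.
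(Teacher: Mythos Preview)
Your approach is structurally parallel to the paper's but the specifics differ. The paper does not decompose $M$ into bottom SCCs or look at SCCs of the product; instead it calls a pair $(s,q)\in S\times F$ \emph{recurrent} when the return language $H_{s,q}=\{s_1\ldots s_k : s_k=s,\ q\in\delta(q,s_1\ldots s_k)\}$ satisfies $\Pr_{M,s}(H_{s,q})=1$, argues that $\Pr_{M,s}(H_{s,q}^\omega)$ is $1$ or $0$ according as $(s,q)$ is recurrent or not, and hence that $\Pr_M(L(A))=\Pr_M\big(\bigcup_{(s,q)\ \text{recurrent}} G_{s,q}\big)$ for the corresponding reach-from-$Q_0$ languages $G_{s,q}$, both sides being evaluated via Lemma~\ref{lem:calculate}.

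The step you single out as the genuine obstacle---showing that acceptance on a set of positive measure forces the run to reach something recurrent---is exactly where the paper's own argument breaks. For the non-recurrent case the paper writes $\Pr_{M,s}(H_{s,q}^\omega)=\lim_n\Pr_{M,s}(H_{s,q}^n)=\lim_n(\Pr_{M,s}(H_{s,q}))^n=0$; the middle equality is false in general because $H_{s,q}$ is not prefix-free, so the factorisations of a word in $H_{s,q}^n$ need not line up with the greedy return times. The authors withdraw the proof in an erratum and cite an explicit $M$ and unambiguous $A$ with $\Pr_M(L(A))>0$ but \emph{no} recurrent pair whatsoever. Your $M$-closed accepting-SCC candidate is exposed to the same phenomenon: the scenario you yourself describe---an accepting run living forever inside a non-$M$-closed SCC by visiting its dangerous vertices only when $M$ happens not to take the killing transition---is precisely what such examples exhibit, and unambiguity alone does not exclude it. So your diagnosis of where the difficulty lies is exactly right, and the difficulty is real: neither the paper's return-probability test nor your graph-theoretic one resolves it. A correct polynomial-time algorithm was obtained only later (Baier, Kiefer, Klein, Kl\"uppelholz, M\"uller, Worrell, CAV~2016), and its characterisation of the ``good'' SCCs is spectral rather than purely combinatorial.
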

\begin{proof}
Given $(s,q)\in S\times F$,
define $G_{s,q},H_{s,q} \subseteq S^+$ by
\begin{eqnarray*}
G_{s,q} &=& \{ s_1\ldots s_k \in S^+ : s_k=s \mbox{ and }
q \in \textstyle\bigcup_{p \in Q_0} \delta(p,s_1\ldots s_k)\} \\
H_{s,q} &=& \{ s_1\ldots s_k \in S^+ : s_k=s \mbox{ and }
q \in \delta(q,s_1\ldots s_k)\} \, .
\end{eqnarray*}
Thus $G_{s,q}$ is the set of finite trajectories of $M$ that end in
state $s$ and which lead $A$ from an initial location to $q$, while
$H_{s,q}$ is the set of finite trajectories of $M$ that end in state
$s$ and that lead $A$ from location $q$ back to itself.

Clearly we can express $L(A)$ as the following $\omega$-regular expression:
\begin{gather}
L(A) = \bigcup_{(s,q) \in S\times F}G_{s,q}H_{s,q}^\omega \, .
\label{eq:lang}
\end{gather}

Define $(s,q) \in S \times F$ to be \emph{recurrent} if
$\Pr_{M,s}(H_{s,q}) = 1$.  We claim that if $(s,q)$ is recurrent then
$\Pr_{M,s}(H_{s,q}^\omega) = 1$, and if $(s,q)$ is not recurrent then
$\Pr_{M,s}(H_{s,q}^\omega) = 0$.  

Suppose first that $(s,q)$ is recurrent.  Consider the set of
trajectories $S^\omega$ under the measure $\Pr_{M,s}$.  Inductively
define a sequence of random variables $\{ h_n\}_{n \in \mathbb{N}}$ on
$S^\omega$ with values in $\mathbb{N} \cup \{\infty\}$ by writing $h_0
= 0$, and
\[h_{n+1} = \left\{ \begin{array}{ll}
\min \{ k : s_{h_n+1} \ldots s_k \in H_{s,q} \} & \mbox{ if $h_n<\infty$}\\
\infty & \mbox{ otherwise}
\end{array}\right . \]
Then $\Pr(h_{n+1} < \infty \mid h_n < \infty) = \Pr_{M,s}(H_{s,q})=1$.
It follows that $\Pr(\bigcap_n h_n<\infty)=1$ and, \emph{a fortiori},
that $\Pr_{M,s}(H_{s,q}^\omega) = 1$.  On the other hand, if $(s,q)$
is not recurrent then
\begin{gather}
\mathrm{Pr}_{M,s}(H_{s,q}^\omega) = \lim_{n<\omega} \mathrm{Pr}_{M,s}(H_{s,q}^n) =
\lim_{n<\omega} (\mathrm{Pr}_{M,s}(H_{s,q}))^n=0 
\label{eq:mistake}
\end{gather}
and the claim is established.

From Equation (\ref{eq:lang}), we
conclude that
\begin{gather}
\mathrm{Pr}_M(L(A)) = \mathrm{Pr}_M\left(
\bigcup_{(s,q)\;\mathrm{recurrent}} G_{s,q} \right) \, .
\label{eq:lang2}
\end{gather}

Now $H_{s,q}$ is the language of an unambiguous NFA.  The automaton
in question is obtained from $A$ by making $q$ the initial state,
adding a new sink state $q_{\mathit{acc}}$, for every transition $p
\stackrel{s}{\longrightarrow} q$ adding a transition $p
\stackrel{s}{\longrightarrow}q_{\mathit{acc}}$, and making
$q_{\mathit{acc}}$ the unique accepting state.  Thus, by
Lemma~\ref{lem:calculate}, we can determine whether $(s,q)$ is
recurrent in time polynomial in $||M||$ and $||A||$.

The language appearing on the right-hand side of (\ref{eq:lang2}) is
likewise expressible by an unambiguous NFA.  Applying
Lemma~\ref{lem:calculate} once again, we can calculate $\Pr_M(L(A))$
in time polynomial in $||A||$ and $||M||$.\qed
\end{proof}

\bibliographystyle{plain}
\bibliography{erratum}

\end{document}